\newtheorem{proposition}{Proposition}
\title{Small Area Estimation of Inequality Measures using Mixtures of Beta}
\author{Silvia De Nicolò\thanks{\texttt{silvia.denicolo@unibo.it}} \and Maria Rosaria Ferrante \and Silvia Pacei}
\date{University of Bologna}
\begin{document}

\maketitle
\thispagestyle{fancy}
\begin{abstract}
Economic inequalities referring to specific regions are crucial in deepening spatial heterogeneity. Income surveys are generally planned to produce reliable estimates at countries or macroregion levels, thus we implement a small area model for a set of inequality measures (Gini, Relative Theil and Atkinson indexes) to obtain reliable microregion estimates. Considering that inequality estimators are unit-interval defined with skewed and heavy-tailed distributions, we propose a Bayesian hierarchical model at the area level involving a Beta mixture. An application on EU-SILC data is carried out and a design-based simulation is performed. Our model outperforms in terms of bias, coverage and error the standard Beta regression model. Moreover, we extend the analysis of inequality estimators by deriving their approximate variance functions.\\
\\
\textbf{Keywords:} Area Level Model, Beta Mixture, Beta Regression, Hierarchical Bayes, Inequality Mapping, Small Area Estimation.
\end{abstract}

\section{Introduction}

The issue of widespread economic inequality characterizes the current global predicament and has a central role in political and economic discourse \citep{piketty2014capital}. The demand for inequality estimates referring to specific subpopulations is growing, policymakers and stakeholders need them in order to formulate and implement policies, distribute resources and measure the effect of policy actions at local level. Besides, such estimates may be valuable in order to further deepen some research trends in regional and inequality studies, for instance, to identify which regions constitute the drivers of national income inequality and to study spatial spillovers \citep{moser2017income, marquez2019role}. For a recent review on spatial inequality, see \citet{cavanaugh2018locating}.

Economic inequality is conventionally measured on equivalent disposable income data.
Generally, such data are collected via household sample surveys which are planned for aggregates estimation at coarse geographical levels, being scarcely available at finer geographical scales. Thus, local domains fall outside the prior design plan, resulting in small-sized samples and yielding unreliable direct estimation (i.e. with large error).
 For instance, the Survey of Income and Living Conditions (EU-SILC), which provides information on income for the whole set of European countries, is able to provide reliable estimates only at NUTS-2 as the maximum level of disaggregation. The problem could be overcome by increasing the survey sample size, but it is often excluded by cost–benefit analysis. A solution to cope with it is to rely on Small Area Estimation (SAE) techniques. Such techniques exploit auxiliary information to borrow strength across areas and produce area-specific estimates with an acceptable level of uncertainty. The model-based class of SAE techniques leverages hierarchical models, both at area level or unit (individual) level, producing reliable estimates when models are correctly specified and informative auxiliary variables are available without error. For a review, see \citet[ch. 4]{rao2015} and \citet{tzavidis2018start}.

In literature, only a few contributions relate to the small area estimation of inequality indicators, since in this context poverty has special consideration and inequality is treated as a minor appendix \citep{molina2010small, pratesi2016analysis}.
 \citet{Fabrizi2016} deal with the Gini index estimation via area level models. Other authors propose unit level models: \citet{tzavidis2016robust} for Gini index and Quantile Share Ratio, \citet{marchetti2021robust} for Gini and Theil indexes and \citet{gardini2022poverty} for the Quantile Share Ratio. All of them never treat more than two inequality measures at a time.
 However, inequality can be seen as a multifaceted concept, embracing diverse objective and normative assessments on the characteristics of income distribution \citep{cowell2011measuring}.
 It can be measured via a plethora of statistical indicators, all of them with different axiomatic properties and featuring varying sensitivities to extreme values and income transfers that could reduce inequality. Thus, the point of concurrently producing estimates of various indicators could provide a comprehensive overview of the phenomenon.

 In this spirit, we propose a small area estimation strategy for a set of four income inequality measures. In addition to the Gini index, we consider the Relative Theil index, particularly appealing due to its additive decomposability property which allows expressing inequality as the sum of between and within components. Moreover, we consider the family of Atkinson measures, which stands apart from other descriptive measures by explicitly incorporating a welfare evaluation of inequality implications and enabling a complete ordering of income distributions.
 All the measures considered vary between 0 (case of perfect equality) and 1 (perfect inequality), having double bounded support, where degenerate cases 0 and 1 have probability very close to zero also in a small sample context.

 Our approach lies within the framework of Bayesian inference of area level models. Generally, unit level proposals require the auxiliary variables to be defined for the entire population. This may be difficult to achieve as administrative archives are not easily accessible and linked to each other and to survey samples \citep{harmeningframework}. In contrast, area level models require the auxiliary information to be defined only at domain level. For this reason, an area level proposal may be valuable, being less demanding with respect to data requirements as well as incorporating design-based properties in a straightforward way.


Inequality design-based estimators have peculiar characteristics; their behaviour in complex survey small samples has been investigated showing highly skewed and heavy-tailed distributions at decreasing sample sizes, even conditionally on auxiliary variables. A Monte Carlo analysis is available in Section S.2 of the Supplementary Material. In this context, they may require a flexible regression specification. Previous proposals in this sense exploit semi-parametric specifications at the unit level \citep{opsomer2008non, bianchi2018estimation}, suitable data transformations \citep{rojas2020data} or alternative likelihood assumptions. At the area level, the inadequacy of Fay-Herriot models \citep{fay1979estimates} in case of skewed and heavy-tailed estimators is particularly established. The alternative likelihoods considered are the skew-normal \citep{Ferraz2012, ferrante2017small}, skew-t \citep{moura2017small} and log-normal ones \citep{slud2006mean, fabrizi2018bayesian}. When dealing with unit interval-defined responses, such alternative specifications may fit values outside the variable support.
In this case, the small area literature at the area level gathers linear mixed models with suitable transformations and Beta regression models \citep{Janicki2020}.
However, even the Beta distribution may fail in case of markedly skewed and heavy-tailed estimators \citep{bayes2012new, Migliorati2018}. The body of literature based on Beta regression in SAE comprises univariate proposals \citep{liu2007hierarchical, bauder2015small, Fabrizi2016}; multivariate versions  \citep{Souza2016, Fabrizi2011}; and zero and/or one inflated extensions \citep{Wieczorek2012, fabrizi2020functional}.

Our proposal involves incorporating an alternative likelihood assumption by adopting a Beta mixture-based approach, whose performances are compared with the most common Beta regression model. Specifically, we assume as sampling distribution the Flexible Beta one, proposed by \citet{Migliorati2018}. The Flexible Beta distribution is a mixture of two Beta random variables,
particularly interesting for the purpose of small-area estimating inequality measure due to its superior flexibility, given its four parameters structure. Indeed, the Beta distribution has good properties, being able to adapt to different shapes, but its two parameters structure hinders further flexible modelling. Eventually, we derive the approximate variance function of each inequality estimator, analyzing how their mean and variance are tied together and whether such interrelation differs among measures.

Our contribution has therefore multiple levels. On one hand, we provide a comprehensive discussion about inequality and its SAE by considering a set of multiple measures. Secondly, we deepen the analysis of inequality estimators by deriving their approximate variance functions, which may be useful for further modelling.
Thirdly, our methodological proposal extends small area literature in the case of unit interval-defined, skewed and heavy-tailed estimators.
Our model comes out
to outperform the Beta one, both in terms of bias and error of the target estimators, avoiding to highly
underestimate inequality and providing reliable estimates.

The paper is organized as follows. Inequality measures and their estimators are defined and described in Section \ref{inequalitymeasure}, together with a proposal of sampling variance estimation. Section \ref{smallareamodels} defines the proposed Beta and Flexible Beta small area models. An application on EU-SILC income data is unraveled
in Section \ref{application} and a design-based simulation can be found in Section \ref{designbasedsimulation}, in order to evaluate the frequentist properties of model-based estimators. Conclusions are drawn in Section \ref{conclusions}.

\section{Inequality Measures}
\label{inequalitymeasure}

In this section, we describe the inequality measures considered and their estimator in the complex survey case. Such estimators are known to be biased in small samples, often leading to underestimation; therefore, we adopt the bias-corrected direct estimators proposed by \citet{Denicolo2021}. Their simulation results lead us to assume that such estimators are approximately unbiased or slightly biased depending on the domain sample size. In Subsection \ref{variancestimation}, their variance estimation is set out and their estimates are commented.

The most famous inequality measure is the Gini index,
 measuring concentration in the distribution of a positive random variable; among its several equivalent definitions, we adopt the formulation of \citet{sen1997economic}. Suppose we are dealing with a finite population, denoted with $\mathcal{U}$, of $N (< \infty)$ elements, and let a sample $\mathcal{S}_{iid}$ of size $n$ be randomly drawn from $\mathcal{U}$. Let $z \in \mathbb{R}^+$ be a characteristic of interest, in our case income,  which is observable for each unit in $\mathcal{S}_{iid}$.  The simple random sampling (srs) estimator of the Gini index is defined as

\begin{align*}
G=\frac{2}{n^2}\frac{ \sum_{i \in \mathcal{S}_{iid}} z_i (r_i -1/2)}{ \hat{\mu}_{srs}}-1,
\end{align*}
\noindent with $r_i$ the rank of the $i$-th sampled unit and $\hat{\mu}_{srs}$ is the sample mean. Let us suppose, moreover, that a sample $\mathcal{S}$ is drawn from $\mathcal{U}$ through a complex selection scheme e.g., involving stratification and multi-stage selection, as in the case of survey data. This involves unequal inclusion probabilities across units, thus a weighted estimator should be adopted, as proposed by \citet{Langel2013}:
\begin{align*}
G_w&=\frac{2 \sum_{i \in \mathcal{S}} w_i z_i (\hat{N_i}-w_i/2) }{\hat{N}^2\hat{\mu}}-1,
\end{align*}
with $w_i$ denoting sampling weights attached to unit $i$, $\hat{N}=\sum_{k \in \mathcal{S}} w_k $, $\hat{\mu}=\sum_{k \in \mathcal{S}} w_k z_k/\hat{N}$, and $\hat{N}_i=\sum_{k \in \mathcal{S}} w_k \mathds{1}(r_k \leq r_i)$. The notation $\mathds{1}(A)$ defines an indicator function, assuming value 1 if $A$ is observed and 0 otherwise.
The weights could be the inverse of the inclusion probabilities or a treated and calibrated version of them.
We adopted its bias-corrected version proposed by \citet{Denicolo2021} as follows
\begin{align*} 
G_{adj}= \frac{\tilde{n}}{\tilde{n}-2}\bigg[G_w- \frac{2\hat{\gamma}}{\hat{\mu}^3} \mathbb{V}[\hat{\mu}]+ \frac{2}{\hat{\mu}^2} \text{Cov}[\hat{\mu}, \hat{\gamma}]\bigg],
 \end{align*}
with  $\tilde{n}=\sum_{k \in \mathcal{S}} \mathds{1}(w_k \neq 0)$ and
$
\hat{\gamma}= \sum_{i \in \mathcal{S}} w_i z_i (\hat{N_i}-w_i/2) /\hat{N}^2$. $\mathbb{V}[\hat{\mu}]$ and $\text{Cov}[\hat{\mu}, \hat{\gamma}]$ denote variance and covariance of design estimators.

Despite its fame, the Gini index has some drawbacks. First of all, it is a stochastic dominance measure enabling only for partial ordering of probability distributions. Namely, this index is able to determine which distribution precedes the other in the ordering only among certain pairs of probability distributions. Secondly, it does not allow for decomposability into within and between components. Thirdly, it is weakly (positional) transfer sensitive which means that, in case of income transfers, the index varies depending on the donor and recipients ranks.

The Relative Theil index, instead, is additive decomposable and has the advantage to be strongly transfer-sensitive, meaning that the measure reacts to transfers depending on the donor and recipient income levels. It is an entropy-based measure and is set up as the relative formulation of the more famous Theil index \citep{theil1979world}, i.e. scaled on the maximum of its support ($\log n$). Its estimator in the $srs$ case is defined as follows
\begin{align}
R= {\frac  {1}{n \log(n)  }}\sum _{i \in \mathcal{S}_{iid}} {\frac  {z_{{i}}}{\hat{\mu}_{srs}}}\log \bigg( {\frac  {z_{{i}}}{\hat{\mu}}} \bigg).
\label{srsrt}
\end{align}
In the complex survey case, the Horwitz-Thompson type estimator for the Theil index has been considered in its bias-adjusted formulation \citep{Biewen2006, Denicolo2021}. This has been adapted to the relative case by replacing the superior bound of its support with its population value, in order to not induce further bias, as follows
\begin{align*}
T_{adj}&= \frac{1}{ \hat{N} } \sum_{i \in \mathcal{S}} w_i  \frac{z_i}{\hat{\mu}} \log \frac{z_i}{\hat{\mu}} +\frac{ \text{Cov}[\hat{\mu}, \hat{\varpi}]}{ \hat{\mu}^{2}} -\bigg( \frac{\hat{\varpi}}{\hat{\mu}^{3}}+\frac{1}{2\hat{\mu}^{2}} \bigg)\mathbb{V}[\hat{\mu}]\\
  R_{adj}&=\frac{T_{adj}}{\log N}
\end{align*}
with $\hat{\varpi}= \sum_{i \in \mathcal{S}} w_i z_i \log z_i/ \hat{N} $.

Another perspective on inequality is depicted by the family of Atkinson Indexes \citep{atkinson1970measurement}. They provide for an explicit value judgment by incorporating in the measurement a social welfare function, regulated by a parameter $\varepsilon$. Under this normative approach, the index value has clear meaning, quantifying the amount of welfare loss of the current inequality level: a value of 0.30 means that “if incomes were equally distributed then we should need only the 70\% of the present national income to achieve the same level of social welfare” \citep{atkinson1970measurement}. They can be seen, therefore, as measures of distributional inefficiency.
Moreover, they satisfy a multiplicative decomposition property \citep{la2008extended} and may provide for a complete ranking among alternative distributions, i.e. being able to determine the ordering among every pair of distributions, and thus to establish a ranking among the full set of distributions.
This happens at the expense of more stringent and subjective assumptions on the choice of the welfare utility function to adopt \citep{bellu2006policy}. Under a concave utility function (whose concavity level is regulated by $\varepsilon$), the estimator of Atkinson index in the $srs$ case is defined as
\begin{align}  \label{srsatk1}
A(\varepsilon \neq 1)&=1-{\frac {1}{\hat{\mu}_{srs} }}\left({\frac {1}{n}}\sum _{i \in \mathcal{S}_{iid}} z_{i}^{1-\varepsilon }\right)^{1/(1-\varepsilon )} \\
A(\varepsilon =1)&=1-{\frac {1}{\hat{\mu}_{srs}}}\exp \bigg\lbrace \frac{ \sum_{i \in \mathcal{S}_{iid}} \log z_i }{n}\bigg\rbrace
 \label{srsatk2}
\end{align}
with  $\varepsilon \geq 0$.
 The parameter $\varepsilon$ denotes the level of inequality aversion: at increasing values of $\varepsilon$, the index becomes more sensitive to changes at the lower end of the income distribution and vice versa. We consider specifically the two indexes referring to $\varepsilon=\lbrace 0.5, 1 \rbrace$ values, which incorporate nice robustness properties to outliers, showing at the same time different sensitivities \citep{Denicolo2021}.
The estimator referred to complex survey case \citep{Biewen2006} is
\begin{align*}
A_w(\varepsilon\neq 1 )&=1-\frac{1}{\hat{\mu}} \bigg( \frac{1}{\hat{N}} \sum_{i \in \mathcal{S}} w_i z_i^{1-\varepsilon}\bigg)^{1/(1-\varepsilon)}\\
A_w(\varepsilon=1)&=1-  \frac{1}{\hat{\mu}} \exp \bigg\lbrace \frac{ \sum_{i \in \mathcal{S}} w_i \log z_i }{\hat{N}}\bigg\rbrace.
\end{align*}
We adopted their bias-adjusted versions \citep{Denicolo2021} as follows
\begin{align*}
A_{adj}(\varepsilon \neq 1) &=A_w(\varepsilon)+[1-A_w(\varepsilon)] \times \\
& \quad \times \bigg[
\frac{\varepsilon \cdot \mathbb{V}[\hat{\varrho}]}{2(1-\varepsilon)^2} [\hat{\mu}-\hat{\mu} A_w(\varepsilon)]^{2\varepsilon-2}+ \frac{\mathbb{V}[\hat{\mu}]}{\hat{\mu}^2}
-\frac{\text{Cov}[\hat{\varrho}, \hat{\mu}]}{\hat{\mu}^{2-\varepsilon}(1-\varepsilon)} [1-A_w(\varepsilon)]^{\varepsilon-1}  \bigg] \nonumber\\
A_{adj}(\varepsilon =1 )&=A_w(\varepsilon)+[1-A_w(\varepsilon)]\bigg[ \frac{\mathbb{V}[\hat{\iota}]}{2} +\frac{\mathbb{V}[\hat{\mu}]}{ \hat{\mu}^{2}}  -\frac{\text{Cov}[\hat{\iota}, \hat{\mu}]}{ \hat{\mu}}\bigg]
 \end{align*}
with $\hat{\varrho}= \sum_{i \in \mathcal{S}} w_i z_i^{1-\varepsilon}/  \hat{N} $ and  $\hat{\iota}= \sum_{i \in \mathcal{S}} w_i \log z_i/ \hat{N}$.

\subsection{Variance Estimation}
\label{variancestimation}

The sampling variances of complex survey estimators have been estimated from the data following a two steps strategy as in \citet{Fabrizi2011}. As a first step, we performed a proper bootstrap procedure developed taking into account the complex sampling design \citep{fabrizi2020functional, Denicolo2021}, using 1,000 bootstrap samples.
Secondly, the bootstrap estimates have been smoothed via a Generalized Variance Function (GVF) approach in order to mitigate the uncertainty of sampling variances induced by small sample sizes.

The definition of a GVF smoothing model needs assumptions on the shape of the variance function for such inequality estimators.
Thus, in the spirit of what was done by \citet{Fabrizi2016} for the Gini index, we derived the variance function of Relative Theil and Atkinson index (for any $\varepsilon)$ under specific simplifying conditions, such as the usual log normality assumption of income variable and $srs$ setting. Indeed, closed-form results for the complex survey case are hard to obtain for non-linear statistics, such as the ones we are dealing with, due to the correlation among observations.
The Gini index result, as well as our following derivations for the other measures, has been directly incorporated into the GVF model.

Before moving to the variance function derivation, let us introduce the partition of the population $\mathcal{U}$ into $D$ small domains, such as each domain $d$ has population size $N_d$, with $N=\sum_{d=1}^D N_d$, and samples $\mathcal{S}_{iid}$ and $\mathcal{S}$ are subsequently partitioned into $D$ subsamples of size $n_d$ and $\tilde{n}_d$ respectively, for $d=1, \dots, D$, with $n=\sum_{d=1}^D n_d$ and $\tilde{n}=\sum_{d=1}^D \tilde{n}_d$.

\begin{proposition}
Under the assumption of log-normality of income,  let us consider the $j$-th individual in domain $d$, whose level of income is $z_{jd}$ variable. As a consequence, $\log(z_{jd}) \sim \mathcal{N}(\mu_d, \varphi^2_d)$, $iid$  at varying \textbf{$j=1, \dots, n_d$}. The srs estimator of Relative Theil Index \eqref{srsrt}, for domain $d$, $R_d$ has variance function
\begin{align} \label{re1eq}
\mathbb{V}[R_{d}] \cong \frac{2{\theta_d^R}^2}{n_d},
\end{align}
where $\theta_{d}^R$ denotes its population value.
\label{re1}
\end{proposition}

\begin{proof}
The Relative Theil index is defined for a log-normal income variable as
\begin{align} \label{theil}
\theta_{d}^R=\frac{1}{ \log (n_d)}\bigg( \frac{\mathbb{E}[z \cdot  \log(z)]}{\mathbb{E}[z]}- \log (\mathbb{E}[z]) \bigg) =\frac{ \varphi_d^2}{2  \log(n_d)}.
\end{align}
Since the moments involved in the previous expression are
\begin{align}
\mathbb{E}[z]&=\exp \bigg\lbrace \mu_d+\frac{\varphi_d^2}{2} \bigg\rbrace \nonumber\\
\mathbb{E}[z \cdot \log(z)]&= \int_0^{+\infty} z \log(z) \frac{1}{z \sqrt{2\pi \varphi_d^2}} \exp \bigg\lbrace -\frac{[\log(z)-\mu_d]^2}{2\varphi_d^2}\bigg\rbrace dx \nonumber \\
&=   \int_{-\infty}^{+\infty}  t  \exp\{t\} \frac{1}{\sqrt{2\pi \varphi_d^2}} \exp \bigg\lbrace -\frac{(t-\mu_d)^2}{2\varphi_d^2}\bigg\rbrace dt \nonumber \\
&=  (\varphi^2_d+\mu_d) \exp \bigg \lbrace\mu_d+ \frac{\varphi_d^2}{2} \bigg\rbrace, \label{integrale}
\end{align}
where the last step \eqref{integrale} involves equation 3.462.6 in \citet{gradshteyn2014table}.
Considering that $\varphi_d^2$ is estimated by
$s_d^2=\frac{1}{n_d-1} \sum_{j=1}^{n_d} [\log(z_{jd})-\hat{\mu}_d]^2$ and $\mathbb{V}[s_d]\cong \frac{\varphi_d^2}{2n_d}$, by applying delta method the result follows
\begin{align}
\mathbb{V}[R_{d}]=\mathbb{V}\bigg[\frac{s_d^2}{2 \log(n_d)}\bigg] & \cong \frac{\varphi_d^4}{2 \log^2(n_d) n_d} \nonumber \\ & = \frac{2{\theta_d^R}^2}{n_d} \label{varelast}
\end{align}
where equation (\ref{varelast}) is obtained by (\ref{theil}) considering that $\varphi_d^2=2 \theta_{d}^R \log(n_d)$.
\end{proof}

\begin{proposition}
Under Proposition \ref{re1} assumptions, the srs estimators of Atkinson index \eqref{srsatk1} and \eqref{srsatk2} for domain $d$, denoted with $A_d(\varepsilon)$, have variance function
\begin{align} \label{varatk2}
\mathbb{V}[A_d(\varepsilon)] \cong \frac{2\theta_{d}^{A}(\varepsilon)^2}{n_d} \exp\lbrace -2\theta_{d}^A(\varepsilon) \rbrace,
\end{align}
where $\theta_{d}^A(\varepsilon)$ denotes the population value of the index.
\label{atk}
\end{proposition}

\begin{proof}
The population value of Atkinson index in domain $d$ under the mentioned assumptions, for any $\varepsilon \geq 0$ and $\neq 1$, is
\begin{align} \label{atk2}
\theta_{d}^A(\varepsilon)=1-\exp \lbrace-\varepsilon  \varphi_d^2/2 \rbrace,
\end{align}
with $\varphi_d^2$ estimated by
$s_d^2$. By applying the normal distribution theory, $\mathbb{V}[s_d]\cong \frac{\varphi_d^2}{2n_d}$ and using the delta method:
\begin{align}
\mathbb{V}[A_d(\varepsilon)]=\mathbb{V}\bigg[1-\exp \bigg\lbrace- \frac{\varepsilon s_d^2}{2}   \bigg\rbrace \bigg] & \cong \frac{\varepsilon^2 \varphi_d^4}{2 n_d}\exp\lbrace -\varepsilon \varphi_d^2\rbrace \nonumber\\
&\cong \frac{2\theta_d^A(\varepsilon)^2}{n_d} \exp\lbrace -2\theta_d^A(\varepsilon) \rbrace,
\label{varatklast}
\end{align}
where equation (\ref{varatklast}) is obtained by Maclaurin expanding (\ref{atk2}), so that $\varphi_d^2 \cong 2\theta_{d}^A(\varepsilon)/\varepsilon$. Note that this result can be easily generalized to the case $\varepsilon=1$.
\end{proof}

Since Proposition \ref{re1} and \ref{atk} has been derived under specific simplifying assumptions, we need to ensure their validity in our context. Thus, we tested them on our bias-corrected estimators for complex survey, namely $A_{adj}$, $G_{adj}$ and $R_{adj}$, through a Monte Carlo simulation. We used EU-SILC data, described in detail in Section \ref{application}, as synthetic population by considering 21 NUTS-2 Italian regions as domains of interest. In order to circumvent possible null or negative income values and non-robustness to outliers, we treated income data by using a semi-parametric Pareto and inverse-Pareto tail modelling procedure using the Probability Integral Transform Statistic Estimator (PITSE) proposed by \citet{Finkelstein2006} and \citet{masseran2019power}.
We draw 1,000 samples by mimicking EU-SILC complex scheme, stratified with two-stage selection. Then we compared Monte Carlo variances with Proposition \ref{atk} and \ref{re1} results, showing very high correlations: 0.79 for Gini index, 0.92 for Atk(1), 0.86 for Atk(0.5) and 0.99 for Relative Theil.
The empirical results, as expected, clearly underestimate the variances in comparison with the Monte Carlo ones, since the effect of the design is ignored. However, the relationship is strongly linear and by fitting a regression with Monte Carlo variances as response and the empirical ones as explanatory, intercepts are zeros and slopes end up to be 2.04 for Relative Theil, 2.23 for Atk(0.5), 2.27 for Atk(1) and 4.68 for Gini index.
The strong linear dependence and proportionality and, at the same time, the non-negligible underestimation, lead us to consider appropriate the implementation of a GVF model.

In the following, the GVF model setting is unravelled, by considering also the Gini index variance derived by \citet{Fabrizi2016} under the same assumptions of Propositions \ref{re1} and \ref{atk}, defined approximately for domain $d$ as
 \begin{align} \label{gini}
\mathbb{V}(G_d) \cong \frac{{\theta_d^G}^2 (1-{\theta_d^G}^2)}{n_d},
  \end{align}
with $\theta_d^{G}$ its population value.

Consider that under a complex survey scheme, the sample $\mathcal{S}$ may be less informative than a sample of the same size $\tilde{n}_d$ under srs, being $\mathcal{S}$ affected by dependency across observations.
The effective sample size, i.e. the srs equivalent sample size, is a proxy of the information carried by the sample and can be estimated for $\mathcal{S}$. Let us suppose it corresponds to $\psi_{\text{ind}} \cdot \tilde{n}_{d}$ for any considered index $ind$, with $\psi_{\text{ind}} > 0$ denoting a deflating factor induced by the dependence. The latter quantity can be alternatively defined as 
the inverse of the design effect deff$_{\text{ind}}$, i.e. the ratio between the design-based variance of a generic index estimator and its srs variance, which measures the amount of variance inflation induced by the complex selection process. Note that the design effect was assumed constant across areas in order to guarantee the smoothing of
original estimates. \textcolor{black}{Let denote with $\hat{\mathbb{V}}[\cdot ]_{boot}$ a bootstrap estimator with large error}, being $y_d$ a generic inequality estimator, among $A_{adj}(\varepsilon)$, $G_{adj}$ and $R_{adj}$, defined for domain $d$ whose population values is $\theta_d$. The numerator of its variance function is generically defined by $f(\theta_d)$. A GVF model is set up by assuming that
\begin{align*}
\mathbb{V}[y_d] = \frac{f(\theta_d)}{\psi_{\text{ind}} \cdot  \tilde{n}_d}.
\end{align*}
 Therefore, we introduce the following smoothing model estimated via generalized least squares
\begin{align*}
 \frac{f(y_d)}{\hat{\mathbb{V}}[y_d]_{boot}} = \psi_{\text{ind}} \tilde{n}_d+\epsilon_d,
\end{align*}
\noindent where $\epsilon_d$ denotes zero-mean heteroskedastic residuals.
The smoothed estimator comes from \eqref{varatk2},  \eqref{re1eq}, \eqref{gini} by replacing $\theta_{d}$ with $y_d$ and $n_d$ with $\tilde{n}_d \cdot \hat{\psi}_{\text{ind}}$, where $\hat{\psi}_{\text{ind}}$ is the GLS estimate of $\psi_{\text{ind}}$.
The pseudo $R^2$ for the smoothing models are respectively, 0.78 for Gini index, 0.72 for Atkinson ($\varepsilon=1$) index, 0.67 for Atkinson ($\varepsilon=0.5$) index and 0.48 for the Relative Theil index. The latter result is due to the instability of the ratio $ f(y_d)/\hat{\mathbb{V}}[y_d]_{boot}$ in case of values close to zero of both the numerator and the denominator.

   \begin{figure}[H]
\centering
    \includegraphics[width=\textwidth]{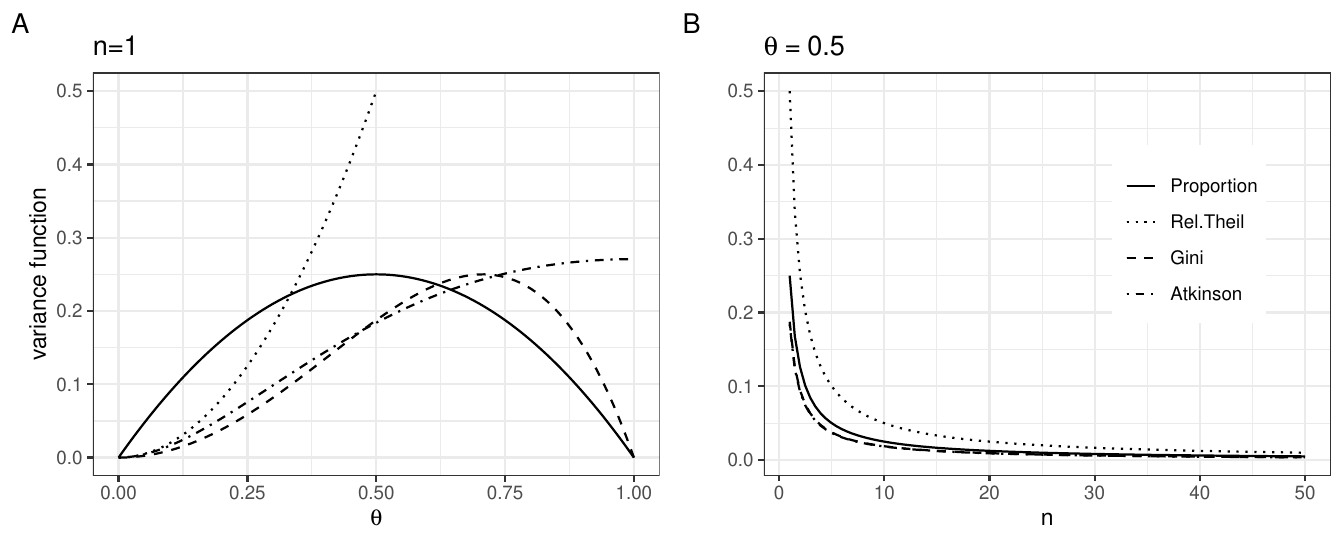}
  \caption{Variance functions with respect to $\theta_d$ ($n_d=1$) on the left-hand side, with respect to $n_d$ ($\theta_d=0.5$) on the right-hand side, for each measure and considering the proportion case.}
  \label{varfun}
  \end{figure}

 Results of Propositions \ref{re1} and \ref{atk} show a very different structure for the variance function of Atkinson and Relative Theil indexes with respect to the Gini index and proportions. A comparison plot can be found in Figure \ref{varfun} displaying each variance function with respect to $\theta_d$ and $n_d$. As opposed to the proportion and Gini index cases, the Atkinson and Relative Theil variance functions are both monotonically increasing, as clear from (\ref{atk2}) and (\ref{theil}). Thus, greater values of $\theta_d$ correspond to greater log income dispersion, inevitably leading to an increase in index variability. The explosive trend of Relative Theil variance is related to its explosive connection with the log income population variance (\ref{theil}). Moreover, notice that the variance function of Atkinson index in (\ref{varatk2}) does not directly depend on its parameter $\varepsilon$, being fixed for the whole parametric family. This does not happen for the Generalized Entropy parametric family, as shown in Section S.1 of the Supplementary Material.

The precision estimates obtained for NUTS-3 Italian regions, employing EU-SILC data (described in Section \ref{application}), are analyzed in terms of the coefficient of variation (CV). ISTAT guidelines state that CV should not exceed 15\% for domains and 18\% for small domains in case of released estimates, otherwise, this serves as an indication to perform small area estimation \citep{eurostat2013handbook}.
 In our case, Relative Theil and Atkinson ($\varepsilon= 0.5, 1 $) indexes show very similar CV distributions with medians slightly lower than 18\%, ranging totally from 6\% to 54\%. This means that half of the domains has non-reliable estimates. On the other hand, the CV of the Gini index is significantly lower, confirming its robustness to outliers, ranging from 3\% to 27\%, with 7 domains having out-of-bound CV. This further motivates us to employ a small area model.

\section{Small Area Models}
\label{smallareamodels}

We propose a Beta mixture model for small area estimation which from now on is named Flexible Beta (FB) model as in Migliorati et al. (2018). In order to evaluate its performance, we compare the estimation results with those obtained by the well-known Beta small area model.
We start describing the Beta model in Subsection \ref{betamodel}, then we set out our proposal in Subsection \ref{flexbetamodel}. Both models are completed by the prior setting described in Subsection \ref{priordistribution} and the Bayesian estimation detailed in Subsection \ref{modelestimation}.

\subsection{The Beta Model}
\label{betamodel}

Let us consider the Beta distribution with mean-precision parametrization \citep{ferrari2004beta}, such that a Beta distributed random variable is denoted with $Y \sim Beta(\mu \phi, (1-\mu)\phi)$, and has probability density function
\begin{align*}
    f_{B}(y; \mu, \phi)= \frac{\Gamma[\phi]}{\Gamma[\mu\phi] \Gamma[(1-\mu)\phi]} y^{\mu\phi-1} (1-y)^{(1-\mu)\phi-1},  \quad 0<y<1.
\end{align*}
Mean and variance are respectively
\begin{align}
\mathbb{E}[Y]=\mu,
\quad \quad \quad
\mathbb{V}[Y]=\frac{\mu(1-\mu)}{\phi+1},
\label{betavar}
\end{align}
with $0<\mu<1$ and $\phi>0$. A classical Beta small area model for $y_d$, denoting the direct estimator of a generic inequality measure and $\boldsymbol{x}_d$ a set of $P$ covariates for domain $d$, constitutes as a hierarchical model with two levels. At the sampling level, the conditional distribution of the direct estimator is modeled as
\begin{align*}
 y_d | \theta_d, \phi_d \stackrel{ind}{\sim} Beta(\theta_d \phi_d, (1-\theta_d)\phi_d), \quad \forall d.
\end{align*}
In this case, $\mathbb{E} [y_d | \theta_d, \phi_d ]=\theta_d$ is the target parameter and is estimated via a logit regression at the linking level, i.e. $\text{logit}(\theta_d) | \boldsymbol{\beta}, v_d =\boldsymbol{x}_d^T\boldsymbol{\beta}+v_d$, with $v_d | \sigma_v^2 \stackrel{ind}{\sim} \mathcal{N}(0, \sigma_v^2)$ being an area specific random effect.

In literature, small area Beta models assume $\phi_d$ as known, in parallel with the known sampling variance assumption of the classical Fay-Herriot model, in order to allow identifiability. Being usually employed for proportions, this parametrization extremely simplifies the posterior geometry, given that, in this case, the variance structure is $\mathbb{V} [y_d | \theta_d, \phi_d]=\theta_d(1-\theta_d)/n_{d\hspace{1mm}}$ under a binomial process. Thus, by combining it with (\ref{betavar}), $\phi_d+1$ can be seen as the effective sample size under a complex survey scheme. Its estimation is carried out considering the design effect, 
so that $\phi_d+1 = \tilde{n}_d \psi_{\text{ind}} = \tilde{n}_d/ \text{deff}_{\text{ind}} $.

Different approaches have been adopted in small area context for the estimation of $\text{deff}_{\text{ind}}$:
\begin{itemize}
    \item estimation as a unique parameter across all areas within the hierarchical model, such as in \citet{bauder2015small} and \citet{Souza2016} in case of proportions,
     \item separate estimation via a variance smoothing model of a common parameter across all areas, as in \citet{Fabrizi2011,fabrizi2016hierarchical} and \citet{Fabrizi2016} similar to the one applied in Subsection \ref{variancestimation},
    \item separate estimation of a set of parameters varying across all areas through the methodology proposed by \citet{kish1992weighting}, based only on design weights, as in \citet{Wieczorek2012} and \citet{liu2007hierarchical}. \citet{kalton2005estimating} found this approximation reasonably accurate for proportions between 0.2 and 0.8.
\end{itemize}

We decided to tackle the problem from a different perspective, by assuming the sampling variance as known, rather than $\phi_d$, and we estimate it separately via a two-step procedure as in Subsection \ref{variancestimation}. This decision has been taken for several reasons. Firstly, a direct estimation of $\phi_d$ within the model appears cumbersome to carry on, due to the complex structure of the variance functions defined in Propositions \ref{re1} and \ref{atk}, leading to a tricky and intractable parametrization. In second place, the known variance assumption is a standard approach across different small area models (e.g. the Gaussian ones). This preserves the set of assumptions and data inputs across different models, favouring consistency of diagnostic measures, such as the goodness-of-fit ones, and allowing for performance comparison and model selection.

\subsection{The Flexible Beta Model}
\label{flexbetamodel}

The Flexible Beta distribution, introduced by \citet{Migliorati2018}, is a mixture of two Beta random variables with different locations and a common dispersion parameter. Its probability density function is
\begin{align*}
f_{FB}(\lambda_1, \lambda_2, \phi, p)= p \cdot f_{B}(y; \lambda_1, \phi)+ (1-p) \cdot f_{B}(y; \lambda_2, \phi),
\end{align*}
with $0<\lambda_2<\lambda_1<1$ distinct ordered means, in order to avoid label switching problems, $0<p<1$ mixing coefficient and $\phi$ common dispersion parameter. This mixture extends the variety of
shapes of Beta distribution in terms of bimodality, asymmetry and tail behavior. Besides, it ensures that each component is distinguishable, being computationally tractable. 

Our small area model proposal for $y_d$ includes, at sampling level, the Flexible Beta as likelihood assumption:
\begin{align*}
y_d|  \lambda_{1d}, \lambda_{2d}, \phi_d, p \stackrel{ind}{\sim} FB (\lambda_{1d}, \lambda_{2d}, \phi_d, p), \quad  \forall d.
\end{align*}
In this case, the expected value and dispersion parameters of the mixture components vary across areas, while the mixing proportion $p$ remains fixed.
Let us denote with $\boldsymbol{\eta}$ the entire set of parameters, namely $\boldsymbol{\eta}=(\lambda_{1d}, \lambda_{2d}, \phi_d, p)$.
In line with \citet{Migliorati2018}, the parametrization considered in order to carry on estimation is
 \begin{align*}
 y_d| \boldsymbol{\eta} \sim FB (\tilde{w}_d + \lambda_{2d}, \lambda_{2d}, \phi_d, p),
 \end{align*}
 with $\tilde{w}_d=\lambda_{1d}-\lambda_{2d}>0$ denoting the distance between mixture components.  Under such model, the expected value and variance are defined respectively as
\begin{align}
\mathbb{E}[y_d | \boldsymbol{\eta}]&=\theta_d = \lambda_{2d} + p \cdot \tilde{w}_d, \label{thetadef}\\
\mathbb{V}[y_d | \boldsymbol{\eta}]&= \frac{\theta_d(1-\theta_d) +  p(1-p) \tilde{w}_d^2 \phi_d} {\phi_d+1}.\label{varfb}
\end{align}

At the linking level, we model the mean of the lowest component with a logit regression, by preserving the Gaussian random effect assumption, as follows
\begin{align} \label{linkinglevel}
&\text{logit}(\lambda_{2d}) | \boldsymbol{\beta}, v_d = \boldsymbol{x}_d^T \boldsymbol{\beta} +v_d \\
&v_d | \sigma_v^2 \stackrel{ind}{\sim} \mathcal{N}(0, \sigma^2_v)\quad \forall d.
\nonumber
\end{align}
As opposed to the FB regression proposed by \citet{Migliorati2018} and to the classical Beta regression, the linear predictor does not model directly the mean but rather a mixture component mean $\lambda_{2d}$, which in this case can be seen as a pure location parameter.

Being $\theta_d$ our parameter of interest, we assume it as a result of the combination of a location component and a deviation from it, caused by the intrinsic skewness of the sampling distribution, as in \eqref{thetadef}.
Since we are generally dealing with right-skewed distributions, we assume the lower mixture component mean as the pure location parameter ($\lambda_{2d}$) and parameters $p$ and $\tilde{w}_d$ as the ones able to capture such deviations.
If $\theta_d$ was modelled through a logit regression, the relation among parameters would imply $\theta_d=\text{logit}^{-1}(\boldsymbol{x}_d^T \boldsymbol{\beta} +v_d)$, letting the linking level parameters masking such effect. On the other hand, in our case, we consider the location $\lambda_{2d}$ to be directly modelled at the linking level, separating \eqref{thetadef} and \eqref{linkinglevel} and letting $p$ and $\tilde{w}_d$ free to account for area-specific deviations.
In this way, our location-modelling approach unleashes $\theta_d$ estimation.
This is confirmed by the fact that, when $\theta_d$ is rigidly modelled through a logit regression, we notice that its estimates are basically overlapping the ones of the Beta model in Section \ref{betamodel} and estimation time is higher.
The modelling of a location parameter different from the mean at the linking level is well-established in small area literature, we recall zero or zero/one inflated Beta \citep{ Wieczorek2012, fabrizi2016hierarchical}, and skew-normal models \citep{Ferraz2012, ferrante2017small}.

  Similarly to the Beta model, the sampling variance $\mathbb{V}[y_d|  \boldsymbol{\eta}]$ is assumed to be known and replaced by a refined estimate $\hat{\mathbb{V}}[y_d]$, as shown in Subsection \ref{variancestimation}.
 The conditioning is not emphasized on the refined estimate to underline the fact that it is not a model estimate but rather an independent one (survey estimate), treated as a known value in a small area model. As a consequence, the dispersion parameter is not directly estimated and can be obtained from \eqref{varfb} as
\begin{align} \label{sigma}
\phi_d& | \theta_d, p, \tilde{w}_d = \frac{\theta_d(1-\theta_d)-\mathbb{V}[y_d |  \boldsymbol{\eta}]}{\mathbb{V}[y_d |  \boldsymbol{\eta}]-p(1-p)\tilde{w}_d^2}.
\end{align}

 Since estimation requires a variation-independent parametrization, we decided to leave $\lambda_{2d}, \phi_d$, and $p$ free to assume any value of their support and to constrain $\tilde{w}_d$, as in the following Proposition.
\begin{proposition}
 Under FB model and the assumptions of Proposition \ref{atk}, let us consider $\phi_d$ and its relation with the other parameters defined in (\ref{sigma}). In order to preserve its bounded support, i.e. $\phi_d>0$,  $\tilde{w}_d$ has to be constrained such that
 \begin{align}
  \begin{cases}
 \tilde{w}_d<\sqrt{\frac{\mathbb{V}[y_d |  \boldsymbol{\eta}]}{p(1-p)}} \quad \text{if} \quad \theta_d<c \\
  \tilde{w}_d>\sqrt{\frac{\mathbb{V}[y_d |  \boldsymbol{\eta}]}{p(1-p)}} \quad \text{if} \quad \theta_d>c
  \end{cases}
  \label{constrains}
    \end{align}
    with $c$ being a threshold that varies according to $n_d$ and the measure considered: is equal to $n_d/(n_d+2)$ for Relative Theil index, $1/2 \times (\sqrt{4n_d+1}-1)$ for Gini index and does not have closed form for Atkinson index.
    \label{prop3}
\end{proposition}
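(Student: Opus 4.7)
The natural plan is to read the constraint on $\tilde{w}_d$ directly off the positivity requirement on $\phi_d$, using the explicit formula in \eqref{sigma}, and then identify the threshold $c$ by determining the sign of the numerator, which does not depend on $\tilde{w}_d$ at all.

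First, I would write $\phi_d = N_d / D_d$ with $N_d = \theta_d(1-\theta_d) - \mathbb{V}[y_d | \boldsymbol{\eta}]$ and $D_d = \mathbb{V}[y_d | \boldsymbol{\eta}] - p(1-p)\tilde{w}_d^2$, and observe that $\phi_d > 0$ iff $N_d$ and $D_d$ share sign. The sign of $D_d$ is controlled entirely by the single threshold $\tilde{w}_d = \sqrt{\mathbb{V}[y_d | \boldsymbol{\eta}]/[p(1-p)]}$ (taking the positive root since $\tilde{w}_d > 0$). So the two admissible branches in \eqref{constrains} correspond to: either (i) $N_d > 0$ and $\tilde{w}_d$ below the square-root threshold, or (ii) $N_d < 0$ and $\tilde{w}_d$ above it. It then remains to translate the sign of $N_d$ into a condition on $\theta_d$ alone.

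For this second step I would substitute, for each index, the variance functions derived in Propositions \ref{atk} and \ref{re1} (and equation \eqref{gini} for Gini, taken from Fabrizi et al.) into the equation $\theta_d(1-\theta_d) = \mathbb{V}[y_d | \boldsymbol{\eta}]$ and solve for the critical $\theta_d = c$. For the Relative Theil index, plugging \eqref{re1eq} gives $(1-\theta_d) = 2\theta_d/n_d$, hence $c = n_d/(n_d+2)$. For the Gini index, \eqref{gini} yields $1 - \theta_d^G = \theta_d^G(1+\theta_d^G)/n_d$, i.e.\ the quadratic $(\theta_d^G)^2 + \theta_d^G - n_d = 0$, whose admissible root is $c = \tfrac{1}{2}(\sqrt{4n_d+1}-1)$. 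For the Atkinson family, substituting \eqref{varatk2} gives $n_d(1-\theta_d) = 2\theta_d \exp(-2\theta_d)$, which is transcendental and has no closed-form root in $\theta_d$, so $c$ is characterized only implicitly. In each case, one checks that $N_d > 0$ for $\theta_d < c$ and $N_d < 0$ for $\theta_d > c$ by comparing the concave parabola $\theta_d(1-\theta_d)$ with the monotone/convex variance function on $(0,1)$, which crosses it exactly once.

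The only mildly delicate point is verifying uniqueness of the crossing point $c$ in the Atkinson case, where the variance function $2\theta_d^2 \exp(-2\theta_d)/n_d$ is not monotone on $(0,1)$ in general; however, on the relevant range of $\theta_d$ the exponential decay keeps the right-hand side below $\theta_d(1-\theta_d)$ near zero and above it as $\theta_d \to 1$ only for sufficiently small $n_d$, so a single sign change is ensured by intermediate-value plus a direct check of derivatives. Everything else is algebraic and routine once the sign analysis of $N_d$ and $D_d$ is in place.
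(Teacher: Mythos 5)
Your proposal is correct and takes essentially the same route as the paper: the paper likewise reads $\phi_d>0$ off \eqref{sigma} as the requirement that $\mathbb{V}[y_d|\boldsymbol{\eta}]$ lie strictly between $\theta_d(1-\theta_d)$ and $p(1-p)\tilde{w}_d^2$ (numerator and denominator sharing sign), works with the case $\mathbb{V}[y_d|\boldsymbol{\eta}]<\theta_d(1-\theta_d)$, and substitutes the variance functions \eqref{varatk2}, \eqref{re1eq}, \eqref{gini} to turn that inequality into $\theta_d<c$ with exactly your values of $c$, then solves for $\tilde{w}_d$ to get \eqref{constrains}. Two cosmetic remarks: your intermediate Gini equation should read $\theta_d^G(1+\theta_d^G)/n_d=1$ (the quadratic and the root you state are nonetheless correct), and in the Atkinson case the variance curve $2\theta_d^2\exp\{-2\theta_d\}/n_d$ exceeds $\theta_d(1-\theta_d)$ near $\theta_d=1$ for every $n_d$ (since the parabola vanishes there), so the single crossing holds without any smallness condition on $n_d$ — a uniqueness point the paper's own proof does not discuss at all.
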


\begin{proof}
By imposing $\phi_d > 0$ on (\ref{sigma}), the solution comes out to be
\begin{align}
 \mathbb{V}[y_d |  \boldsymbol{\eta}] \in
 \bigg( \min \bigg\{ \theta_d(1-\theta_d),  p(1-p)\tilde{w}_d^2 \bigg\},
 \max \bigg\{ \theta_d(1-\theta_d),  p(1-p)\tilde{w}_d^2 \bigg\} \bigg) . \label{mycase}
  \end{align}
After splitting the problem into two cases, namely
 \begin{align}
 \mathbb{V}[y_d |  \boldsymbol{\eta}] <  \theta_d(1-\theta_d) \quad\quad \text{and}\quad\quad  \mathbb{V}[y_d |  \boldsymbol{\eta}] >  \theta_d(1-\theta_d)
 \label{caso}
\end{align}
we substitute $\mathbb{V}[y_d |  \boldsymbol{\eta}]$ in \eqref{caso} with (\ref{varatk2}), (\ref{re1eq}) and (\ref{gini}). The results for all three measures have respectively the following shapes $\theta_d<c$ and  $\theta_d>c$, where $c$ depends on $n_d$ and differs for any measure. Then, we solve equation \eqref{mycase} for $\tilde{w}_d$ for each case obtaining \eqref{constrains}.
\end{proof}

To understand the behaviour of previous constraints within our inferential problem, we evaluated them by considering the case $n_d=2$, as a degenerate case with maximum variance. Indeed, a lower sample size does not allow for inequality measurement. We numerically derive the minimum of $c$, being $0.50$ for the Relative Theil index, $0.84$ for Atkinson indexes and $1$ for the Gini index.
Discarding the Gini index being always $\theta^G_d<1$, the case $\theta_d > c$ is totally implausible for all the other income inequality measures. Indeed, those values correspond to far-fetched values of log income variable dispersion that, following (\ref{atk2}) and (\ref{theil}), equals to $\varphi_d^2>0.69$ for Relative Theil index, $\varphi_d^2>3.71$ for Atkinson indexes. To be clear, a log-normal fitting on 2017 EU-SILC equivalent disposable income done by \citet{Denicolo2021}, shows $\hat{\varphi}^2=0.18$.
Therefore, we opt to consider only the case $\theta_d<c$ in Proposition \ref{prop3}; the same reasoning could be easily done in case of proportions, where $\theta_d<c$ holds for any $n_d>1$.
An evaluation of the admissible support of $\theta_d$ in the considered case at varying $n_d$ can be found in Section S.3 of the Supplementary Material.

As a consequence, the range of $ \tilde{w}_d$ is defined as
  \begin{align}
 \tilde{w}_d \in \bigg (0, \min  \bigg\lbrace \frac{1-\lambda_{2d}}{p}, \sqrt{\frac{\mathbb{V}[y_d |  \boldsymbol{\eta}]}{p(1-p)}}\bigg\rbrace \bigg), \label{secondvinculum}
   \end{align}
where the upper bound of its support has a double vinculum. The first term, on $\lambda_{2d}$, can be seen as a support vinculum, since it allows $\theta_d$ to be upper bounded, i.e. $\theta_d<1$. The second one follows from Proposition \ref{prop3} and clearly takes into account the fact that the imposed sampling variance (given $p$) has to constrain the distance between mixture components. Thus, the distance has been modelled in line with (\ref{secondvinculum}) as
\begin{align*}
\tilde{w}_d=w \cdot \max{\text{supp}} \lbrace \tilde{w}_d \rbrace= w \cdot \min  \bigg\lbrace \frac{1-\lambda_{2d}}{p}, \sqrt{\frac{\mathbb{V}[y_d |  \boldsymbol{\eta}]}{p(1-p)}} \bigg\rbrace
\end{align*}
unleashing parameter $w$ free to vary in $(0,1)$, being common across the areas. The underlying assumption implies that, for any given domain, different determinations of each direct estimator pertain to two latent groups, one of which displays a greater mean than the other, for any given set of covariates. Parameter $w$ retains the meaning of distance between the regression functions of the two groups and it can be called the normalized distance \citep{Migliorati2018}.

The underlined parametrization is variation independent without penalizing the interpretability of the parameters.
Moreover, the target parameter defined in \eqref{thetadef} can be rewritten as
\begin{align}
\theta_d |\lambda_{2d}, p, w  =\lambda_{2d} + p \cdot w \cdot \min  \bigg\lbrace \frac{1-\lambda_{2d}}{p}, \sqrt{\frac{\mathbb{V}[y_d |  \boldsymbol{\eta}]}{p(1-p)}}\bigg\rbrace,
\label{targetp}
\end{align}
being a sum between the location parameter  $\lambda_{2d}$ and a second term depending on $\lambda_{2d}$, the sampling variance, the mixing parameter $p$ and the normalized distance $w$. This shape permits to grasp an alternative interpretation of $w$ as a factor that regulates the impact of sampling variance on $\theta_d$ (given $p$). Indeed, due to the different scale, usually $(1-\lambda_{2d})/p > \sqrt{\mathbb{V}[y_d |  \boldsymbol{\eta}]}/\sqrt{p(1-p)}$. Note that the structure of $\theta_d$ is quite similar to its corresponding parameter in case of skew-normal likelihood  \citep{moura2017small, ferrante2017small}. In this case, the expected value is the sum of the location parameter and another component that depends on the known sampling variance and a skewness parameter.

Given the above considerations, as long as the sampling variances decrease, and presumably area sample sizes increase, the conditional distribution of direct estimator $y_d$  stretches to a Beta distribution:
\begin{align}
y_d | \boldsymbol{\eta} \stackrel{n_d  \to +\infty}{\sim}  Beta(\theta_d \phi_d, (1-\theta_d) \phi_d).
\label{convergencetobeta}
\end{align}
The expected value tends to the linear predictor and $\phi_d  \rightarrow{} \theta_d (1-\theta_d )/\hat{\mathbb{V}}[y_d]-1$ as in (\ref{betavar}).
Moreover, it is well known that a Beta distribution with large shape parameters, i.e. low variance, converges to a normal distribution. Therefore, it is possible to state that, as the sampling variance tends to 0, our sampling model tends asymptotically to the Gaussian one. Also when $p \rightarrow{} 0 $, $p \rightarrow{} 1 $ or  $w \rightarrow{} 0$, (\ref{convergencetobeta}) is verified, as common in degenerate mixture models \citep{fruhwirth2019handbook}.

To summarize, the parameter to be estimated in FB model are the ones related to the linear predictor $\boldsymbol{\beta}$ and the random effect variance $\sigma_v$, the mixing coefficient $p$ and the normalized distance between mixture components $w$. Parameters $p$ and $w$ adjust the predictor depending on the magnitude of its sampling variance, guaranteeing a more flexible mean modelling as shown in \eqref{targetp}. This can be seen as the main characteristic of the FB small area model.

\subsection{Prior Distributions}
\label{priordistribution}

The following weakly-informative priors complete the Beta model:
\begin{align}
\sigma_v  \sim \text{Half-} \mathcal{N} (0, \nu^2) \quad \text{and} \quad \beta_k  \sim \mathcal{N}(0, \sigma^2), \label{betasigmaprior} \quad \forall k=1, \dots, P
\end{align}
 with $\sigma^2=10$. Considering the scale of the logit transformation, $\nu^2=1$ can be seen as quite a non-informative option. For the FB model, the choice of the prior includes in addition to \eqref{betasigmaprior}
\begin{align} \label{priorp}
p \sim \text{Unif}(0,1) \quad \text{and} \quad w \sim \text{Unif}(0,1).
\end{align}
This can be seen as a general formulation in line with \citet{goodrich2018bayesian}. However, in some specific cases e.g. when dealing with a few areas, we recommend using a slightly informative prior for the mixing coefficient to foster convergence or avoid convergence problems, such as
$p \sim Beta(2,2)$. Such an option enables to avoid the boundaries of its support while being very close to a uniform distribution.

\subsection{Model Estimation}
\label{modelestimation}

We estimate the model by adopting a Hierarchical Bayes (HB) approach.
This approach to inference has several benefits in the SAE context \citep[section 10]{rao2015}, as to easily manage non-Gaussian distributional assumptions
and to capture the uncertainty about all target parameters through the posterior distribution.

The FB model falls within the definition of a finite mixture, thus it could be seen as an incomplete data model where the allocation of observations to each mixture component is an unknown and latent component. In this case, a Bayesian approach based on Markov Chain Monte Carlo (MCMC) techniques is particularly suitable for posterior exploration.
 Specifically, the fitting was carried out by implementing the no-U-turn sampler \citep{hoffman2014no}, an adaptive variant of Hamiltonian Monte Carlo (HMC) algorithm via \texttt{Stan} language \citep{carpenter2017stan}.
 The HMC exploits differential geometry properties of the posterior distribution, in order to improve MCMC efficiency \citep{betancourt2017conceptual}.
We performed estimation by using 4 chains, each with 5,000 iterations, discarding the first 2,000 as warm-up.

Within the HB framework, we assume a quadratic loss and define as point predictor of $\theta_d$ its posterior expected value, namely
\begin{align}
\hat{\theta}^{HB}_d=\mathbb{E}[\theta_d|\text{data}] \quad \forall d,
\label{predictor}
\end{align} hereafter named model-based estimate. The posterior variance of the target parameter is used to describe its uncertainty.

An important property of the Fay-Herriot model is that, under the assumption of known random effect variance in HB context, the predictors are the outcome of a shrinkage process in between the direct estimate $y_d$ and the synthetic estimate, being bounded. Predictors tend towards $y_d$ when sampling variance is small in comparison with model variance, and towards synthetic estimate when it goes the other way round. In case of Beta assumption, predictors are not bounded. However, \citet{Janicki2020} proved its asymptotic behavior, showing that it tends towards $y_d$ when sampling variance goes to zero, and towards the synthetic estimate, in this case logit$^{-1}(\boldsymbol{x}_d^T \boldsymbol{\hat{\beta}})$, when model variance goes to zero. The first property, also known as design consistency, has been proved for the Beta model also by \citet{fabrizi2020functional} relying on asymptotic Gaussianity.
Thus, given the asymptotical behavior of our model in \eqref{convergencetobeta}, we can state that the design-consistency property is preserved also under the FB model.

 The FB small area model can be applied to any unit interval response and its estimation has been implemented in the R package \texttt{tipsae} \citep{denicologardini2022}, together with small area specific diagnostics functions and other complementary tools.

\section{Application on EU-SILC Data}
\label{application}

We are interested in estimating inequality in Italian NUTS-3 regions by using EU-SILC data. Given the high level of uncertainty of the direct estimates, described in Subsection \ref{variancestimation}, we employ small area models considering both Beta and FB likelihoods. We estimate four separate univariate models for each likelihood, referring to the four different inequality measures.
A survey data description directly follows, while auxiliary variables, from other sources, are set out in Subsection \ref{auxiliaryvariables}. Eventually, model results are compared in Subsection \ref{results}.

The EU–SILC survey \citep{guio2005income} collects cross-sectional and longitudinal microdata on income, poverty, social exclusion and living conditions in a timely manner. The survey is conducted in each country by the National Institute of Statistics and coordinated by Eurostat, guaranteeing consistent methodology and definitions across all EU member states. The sampling design involves a rotational panel lasting four years, where each year one-quarter of all respondents is newly introduced. As regards the Italian sample provided by ISTAT, the survey units (households) are sampled according to a complex survey scheme involving stratification and two-stage selection. The first-stage units are municipalities, stratified accordingly to the demographic size, the ones with great size are considered as self-representative units and form a take-all stratum. Within the selected primary units, households are drawn randomly as secondary sampling units.

 In our case, we concentrate on the 107 NUTS-3 Italian domains by using the 2017 wave. The sample comprises 22,226 households and 48,819 corresponding individuals.
The domain size ranges from a minimum of 32 to a maximum of 2,536 individuals; with 25th, 50th and 75th percentiles respectively as 196, 314, 612 (from 18 to 1,270 households; with percentiles 86, 138, 275).

\subsection{Auxiliary Variables}
\label{auxiliaryvariables}

The possible determinants of income inequality within European regions have been identified by \citet{perugini2008income}. According to them, the main ones are human capital endowment, labour market performances, economic development, industrial specialization and the demographic structure.

A small area model does not have causal inference ambitions, but rather it requires auxiliary information to be accurately known at population level. Therefore, we restrict the choice to data currently available and measured without error: census and registry office data as well as tax forms data. As a human capital endowment proxy, we calculated the ratio between the number of people aged 15–64 with a high school diploma or higher level of education, and the number of people within the same age class with compulsory education level, based on the 2011 Italian population census data. \citet{Fabrizi2016} refer to this indicator as to the people-in-higher-education ratio. The demographic structure is explained by areal population density and aged dependency ratios. Moreover, as suggested by \citet{perugini2008income}, we used the percentage of resident foreigners (immigrants) and the male/female resident foreigners ratio as indirect measures of economic development.

Concerning fiscal archives data, we include average taxable income claimed by private residents, the percentage of residents aged more than 15 filling tax forms and the percentage of residents with income lower than/greater than double the national median filling tax forms. These variables measure the affluence of income earners in the area and are adopted as indirect proxies of labour market performances \citep{Fabrizi2016}.

Lastly, in order to provide strongly correlated information, we add the corresponding inequality measures calculated on a discrete scale given the income classes declared by tax forms. Note that those measures are estimated on market income (i.e. income before taxes and transfers), while our target variable is the disposable income instead (after taxes and transfers). We obtain raw estimates of market income inequality, legitimately greater than our response due to the redistributive power of taxes and transfers on income distribution. This happens despite the missing component of variability within income classes, not captured by our market income inequality estimators.
All the auxiliary variables were standardized before being incorporated into the models, in order to harmonize the scale, and subjected to preliminary variable selection to avoid multicollinearity.

\subsection{Results}
\label{results}

\begin{table}
    \centering
\begin{tabular}{rrrr}
  \hline
 && Beta & FlexBeta \\
  \hline
   Atk(0.5) &\texttt{looic}  & -572.2 & -595.7 \\
&    (\texttt{se}) & (18.9) & (17.4) \\
&  \texttt{acvr} \% & 43.2 & 51.1\\
Atk(1)&\texttt{looic} & -404.6 & -425.5 \\
&  (\texttt{se}) & (18.6) & (17.7) \\
&  \texttt{acvr} \% & 41.6 & 46.0\\
  Relative Theil& \texttt{looic}   &-966.8&-992.3\\
&  (\texttt{se}) & (16.9)& (15.6) \\
&   \texttt{acvr} \% & 36.9 & 47.3\\
      Gini &  \texttt{looic}  &-388.3&-392.3 \\
&    (\texttt{se}) &(21.3)&(20.5)\\
&  \texttt{acvr} \% & 38.5 & 38.3\\
   \hline
\end{tabular}
\caption{\texttt{looic} and related standard error as well as \texttt{acvr} for each model and each measure}
   \label{looicvr}
\end{table}
\begin{figure}[h]
\centering
    \includegraphics[scale=0.6]{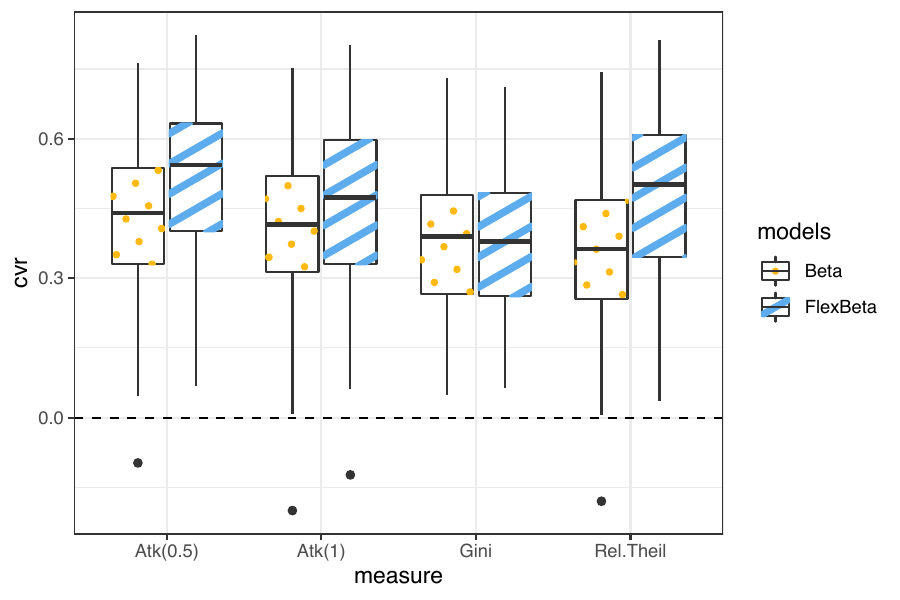}
  \caption{CV reduction for each model and each measure}
  \label{cvred}
  \end{figure}

The model estimation has been carried out and posterior draws have been validated through MCMC diagnostics, showing good chain mixing and quick convergence for any measure.
The estimation times are between 20.62 and 26.90 CPU seconds for the Beta model and between 71.70 and 80.38 CPU seconds for the Flexible Beta model on an Intel Core i5 dual‑core (1,8GHz) processor.
 A model comparison has been also performed through specific model diagnostics. Concerning goodness-of-fit and model comparison, the \texttt{looic} measure, based on leave-one-out cross-validation \citep{Vehtari2017}, and its standard error have been used.
The \texttt{looic} is preferred over the most classical DIC and AIC measures, since it is fully Bayesian, using the entire posterior distribution, it is invariant to parametrization and it works for singular models.

In order to evaluate model-based estimators performances in comparison with direct estimators, the Coefficient of Variation Reduction measure \citep{ferrante2017small} is used to calculate the precision improvement:
\begin{align*}
   \texttt{cvr}_d^{HB}=1-\frac{\mathbb{V}[\theta_d | \text{data}]^{\frac{1}{2}} \cdot y_d}{\hat{\mathbb{V}}[y_d]^{\frac{1}{2}} \cdot \mathbb{E}[\theta_d | \text{data}]}  \quad \forall d,
\end{align*}
using predictors and their posterior variances for any HB model. This constitutes a frequently used measure for small area model evaluation. However, the comparison between CV of model-based and design-based estimators might sometimes be spurious since the former could be design-biased even when the model is correctly specified. A discussion about it can be found in \citet{ferrante2017small}. Thus, we perform such comparison aware that the bias properties of our estimators have to be further deepened through the simulation study of Section \ref{designbasedsimulation}.

Diagnostics for each model, \texttt{looic} and \texttt{cvr}, averaged among all domains (\texttt{acvr}), are displayed in Table \ref{looicvr}, while the full distribution of \texttt{cvr} is displayed in Figure \ref{cvred}. Results show better goodness-of-fit and coefficient of variation reduction for the Flexible Beta model compared to the Beta one. This holds for all measures with the exception of the Gini index, where diagnostics do not vary significantly between models. Following the distributional analysis conducted in Section S.2 of the Supplementary Material, the sampling distribution of Gini index estimator does not show departures from Gaussianity in small samples, in comparison with other measures, presenting only light skewness and lepto-kurtosis. As a consequence, the employment of a mixture model seems to be irrelevant for a substantial improvement of the estimates.

The FB model allocates greater density on the right-hand tail of estimator distribution, being able to better capture it. This aspect is clear from density plots in Figure \ref{densities}, displaying direct estimates versus model-based estimates of Beta and FB models in the 107 domains. Notice that any data point refers to a domain, being the expected values of different posterior distributions as in (\ref{predictor}), thus let us consider them as global distributions not related to domain-specific posteriors. The FB model-based estimates tend to be greater than Beta model ones as clear from scatterplots in Figure \ref{densities}, capturing the right-hand tail of the distribution and avoiding to underestimate inequality, as it will be more intelligible in Section \ref{designbasedsimulation}.
The Gini index case shows, again, large similarities across the two models.

 \begin{figure}[h]
\centering
    \includegraphics[width=0.95\linewidth]{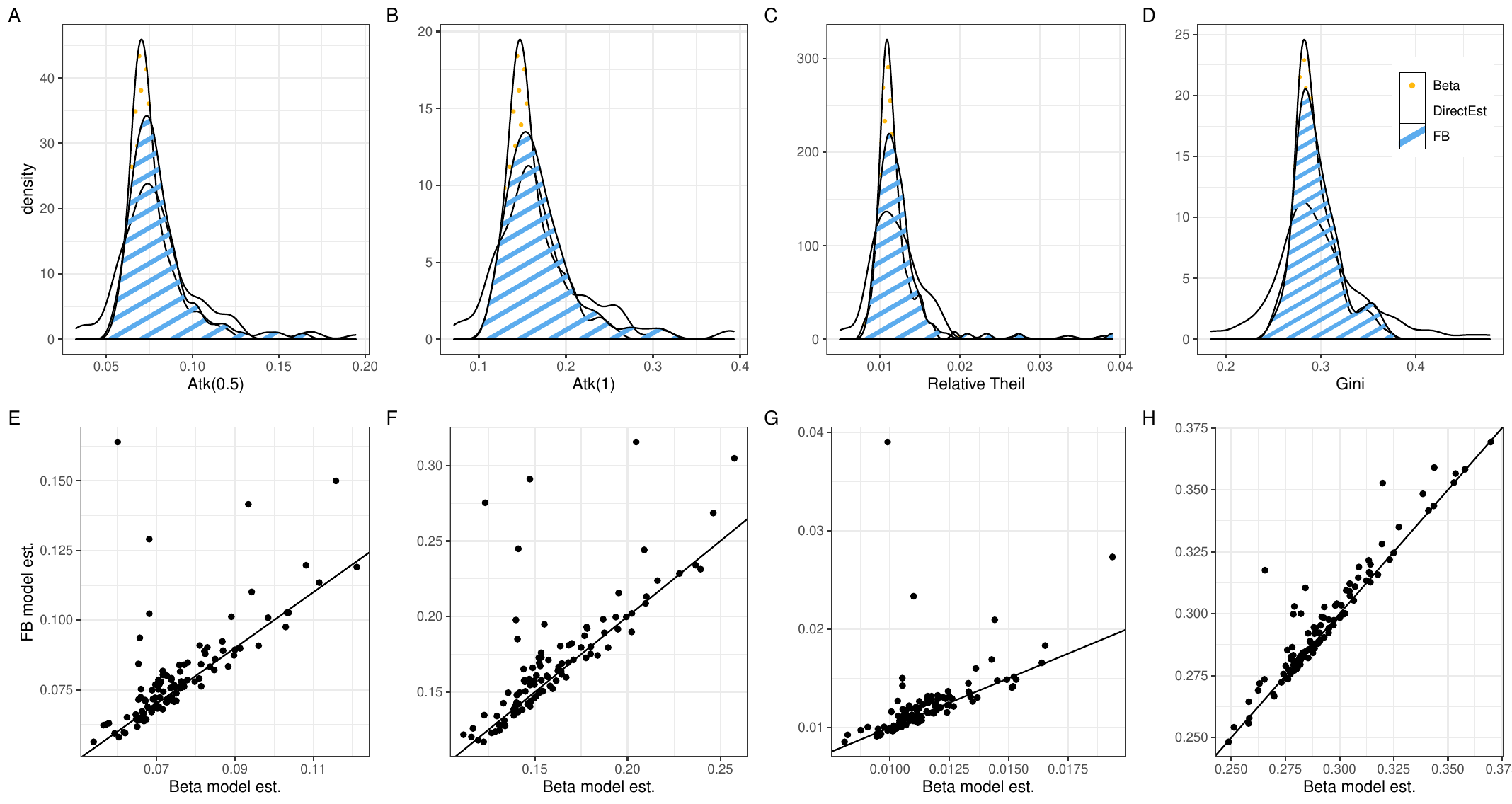}
  \caption{Densities of model-based estimates versus direct estimates and scatterplot of model-based estimates with bisector line}
  \label{densities}
  \end{figure}

\begin{figure}[h]
\centering
    \includegraphics[width=0.95\linewidth]{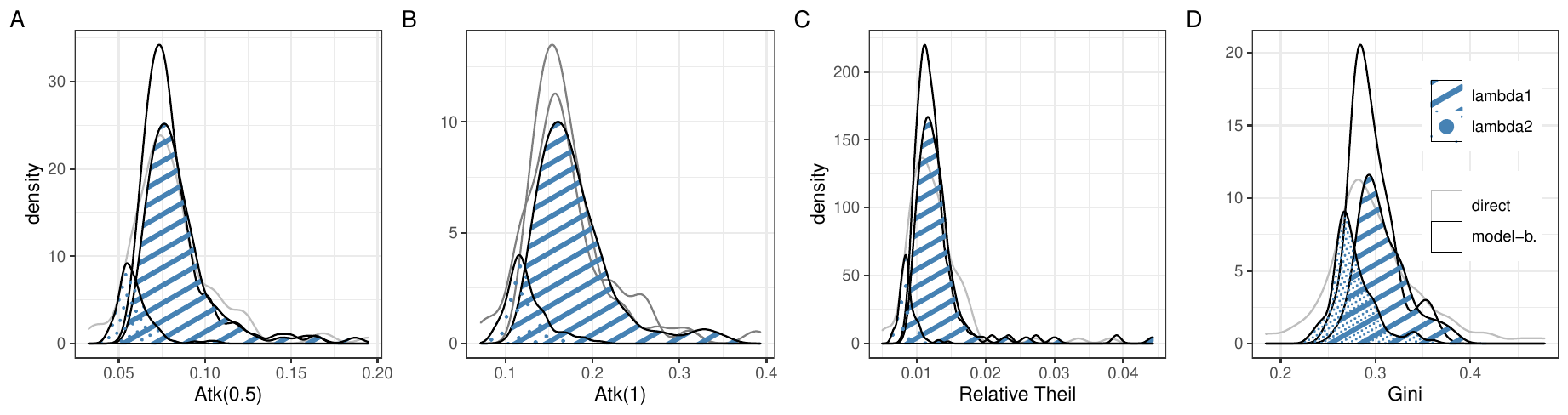}
 \caption{Posterior means distributions of the mixture components expected values $\lambda_{1d}$ and $\lambda_{2d}$, $\forall d$, weighted for $\mathbb{E}[p | \text{data}]$, in comparison with direct estimates and Flexible Beta model-based estimates}
  \label{mixt}
  \end{figure}

Deep diving on FB estimates, Figure \ref{mixt} displays posterior means distributions of mixture components expected values $\lambda_{1d}$ and $\lambda_{2d}$, $\forall d$, weighted for $\mathbb{E}[p | \text{data}]$, in comparison with direct estimates and Flexible Beta model-based estimates. The posterior means of the mixing coefficient are 0.83 for both Atkinson indexes, 0.85 for Relative Theil and 0.63 for Gini index, due to its more symmetric distribution. \textcolor{black}{We may notice from Figure \ref{mixt} that the mixture component embracing lower inequality values has always less weight.}

\begin{figure}
\centering
    \includegraphics[scale=0.6]{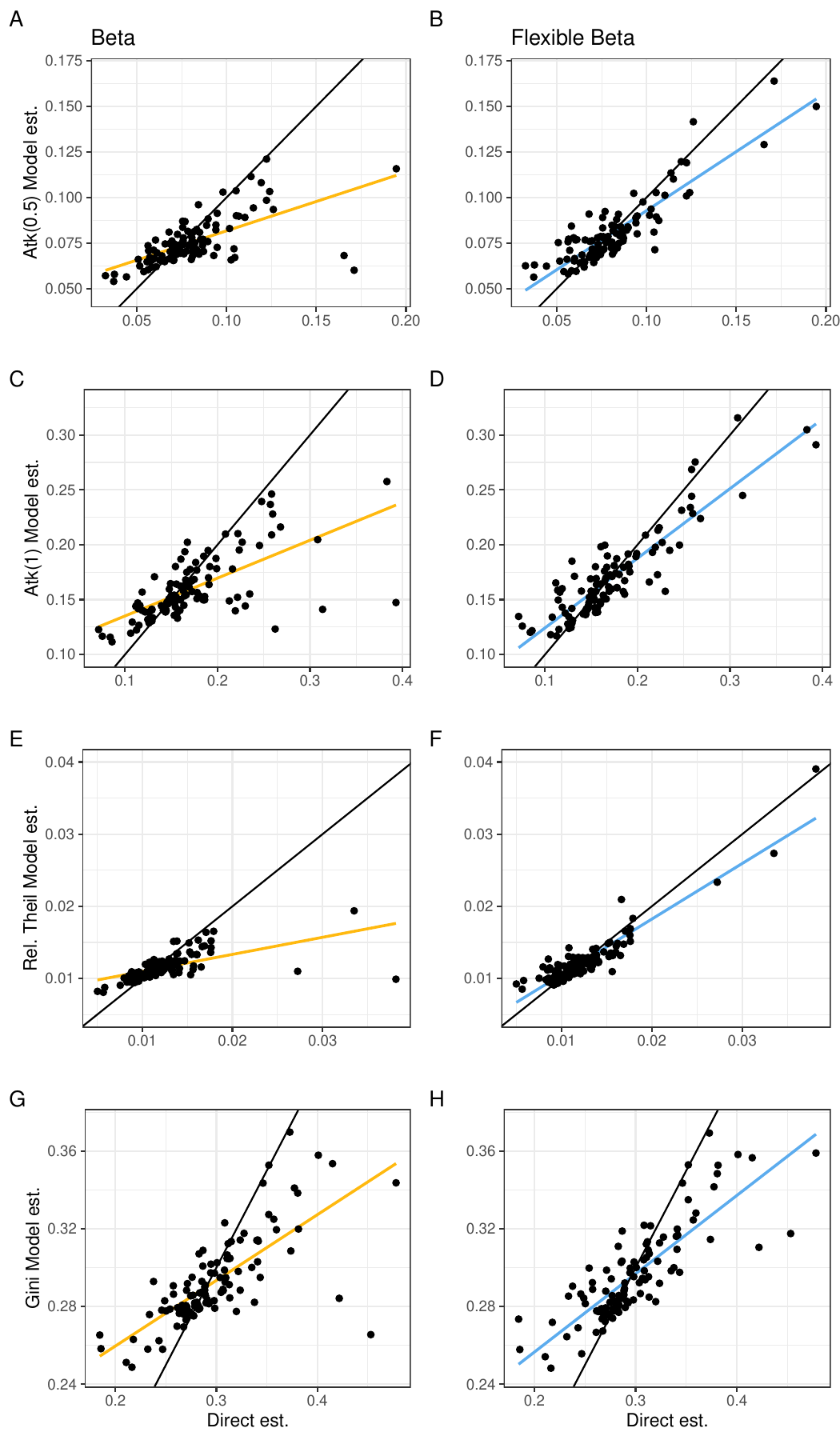}
  \caption{Direct estimates versus model-based estimates for each measure in Beta and Flexible Beta models: bisector in black, coloured linear regression line}
  \label{shrinking}
  \end{figure}

\begin{figure}
\centering
    \includegraphics[scale=0.6]{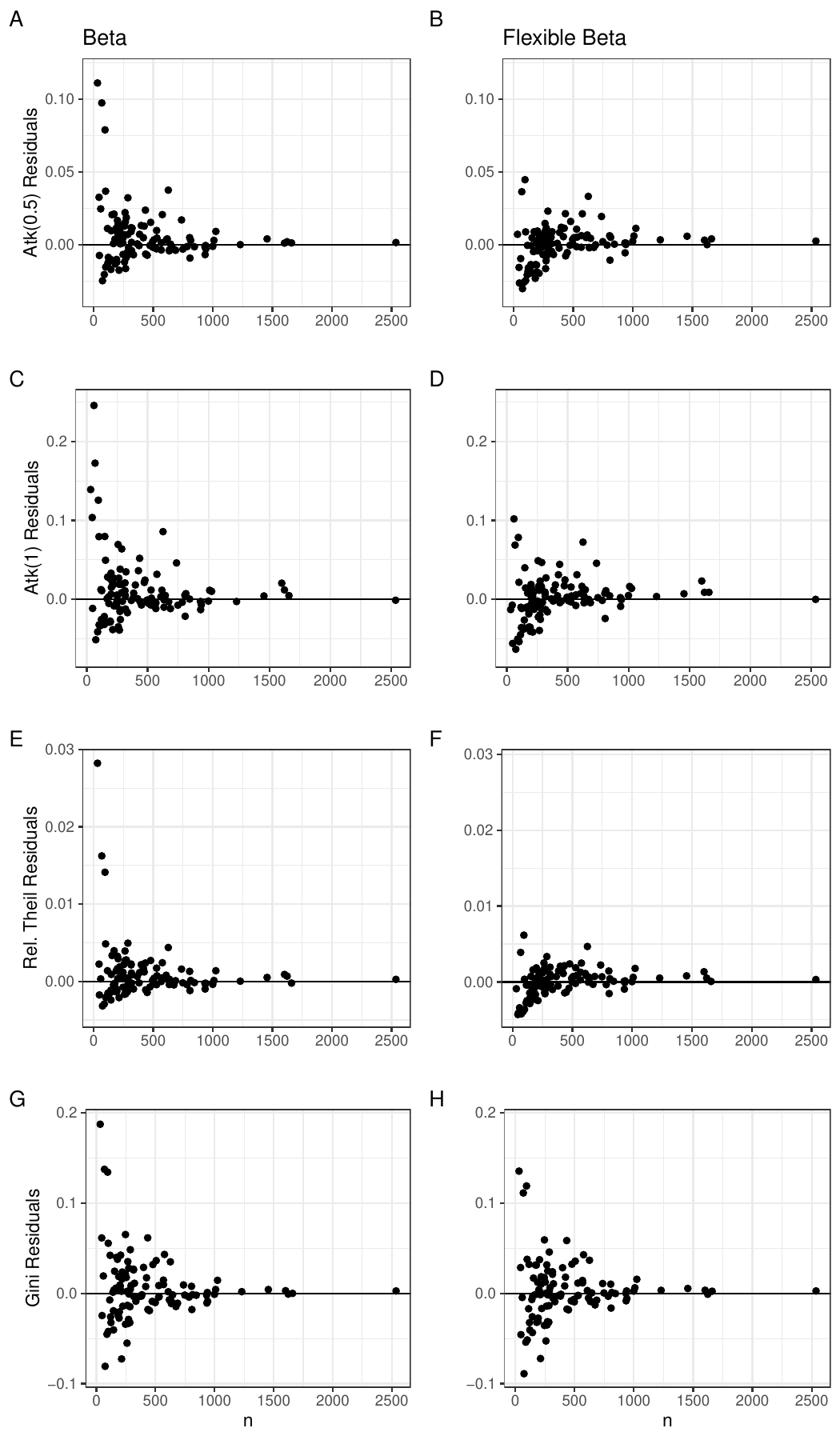}
  \caption{Design consistency check (sample sizes vs residuals) for each measure in Beta and Flexible Beta models}
  \label{designcon}

  \end{figure}

The shrinking process is displayed in Figure \ref{shrinking} for each model. It appears distinctly that estimates are more shrunk in the case of Beta model, as highlighted by the distance between linear regression and bisector lines. Moreover, a property that should be desirable for any small area model is the consistency among direct estimates and model-based ones, namely direct estimates outliers should not be completely pushed towards the opposite tail as model estimates. This consistency property does not hold in case of Beta models, having 2 or 3 top outliers pushed towards the lowest values of the distribution. This is due to the strong impact that auxiliary variables have on the outcome and to the little flexibility of the model. On the contrary, the FB model keeps its model estimates consistent with their input ones, operating overall less shrinkage. Another desirable property is the design consistency, i.e.  $(\hat{y}_d-\theta_d^{HB}) \rightarrow{} 0$ as long as $\tilde{n}_d$ increases. This property hold for all models, as clear from Figure \ref{designcon}. Notice that the magnitude of residuals for Beta models is relevant for domains with smaller sample sizes and strongly unbalanced on positive residuals. This makes sense, given the strong shrinkage operated on high outliers.
Posterior summaries of regression coefficients in the FB model are detailed and commented in Section S.4 of the Supplementary Material.

\section{Design-Based Simulation}
\label{designbasedsimulation}

A design-based simulation study has been carried out to evaluate the frequentist properties of the FB model-based estimators in comparison with the Beta ones. We consider the Italian EU-SILC sample as synthetic population and the 14 metropolitan cities and the remaining 21 administrative regions as synthetic domains.
In order to deal with a sufficient number of areas, assuring at the same time high variability of direct estimates (i.e. keeping synthetic population sufficiently large),
we pool 2009, 2013 and 2017 EU-SILC waves as independent and separate populations with a total of 105 domains.
The study is not based on generated data under some specific income distributional assumption, since the aim is to check whether this framework can work with close-to-reality income data, affected by peculiar problems, e.g. extreme values.

From each domain, $S=1,000$ samples have been repeatedly extracted by mimicking complex EU-SILC design, with stratification, multi-stage selection and the distinction between self-representative and non-self-representative strata.
We adopted three different scenarios for simulation, the first two involve different sampling rates, respectively 3\% and 5\%. Note that observations included in 3\% samples have been selected from those of 5\% samples to attenuate the effect of sampling variability. The third scenario consists of running the simulation on 5\% samples with smoothed income data, where an extreme values treatment (\texttt{evt}) has been performed as previously described in Subsection \ref{variancestimation} \citep{Finkelstein2006, masseran2019power}.
The drawing is done at the household level, with a total of extracted individuals in each domain ranging respectively from 9 to 115 (median equals 31) for 3\% and from 9 to 228 (median equals 63) for 5\%.

Bias-corrected inequality estimators have been calculated for any extracted sample, and a suitable set of covariates have been selected among the ones cited in Subsection \ref{auxiliaryvariables}. Covariates have been calculated at the corresponding geographical detail for the 105 synthetic domains.  At a lower level of geographical disaggregation, such as in this case (NUTS-2 regions), the correlation among covariates and between covariates and response is stronger in comparison with our application setting (NUTS-3 regions). This is coherent since raw estimates are measured at macro level, inducing less error. For any iteration $s=1, \dots, S$, Beta and FB model has been estimated with a fixed set of covariates. The sole distinction with the setting adopted in Section \ref{application} regards the prior of mixing coefficient $p$ in \eqref{priorp}, substituted with $p \sim \text{Beta}(2,2)$, in order to speed up convergence and save computational time.

Considering the generic model-based estimate at iteration $s$ for domain $d$ as $\hat{\theta}^{HB}_{ds}$ and the corresponding population value $\theta_d$, we define Relative Bias (RB), Absolute Relative Bias (ARB), Mean Squared Error (MSE), Relative Mean Squared Error (RMSE) and Average Effect (AEFF) as follows:
\begin{align*}
&\text{RB}(\hat{\theta}^{HB}_d)=\frac{1}{S} \sum_{s=1}^{S}  \bigg( \frac{\hat{\theta}_{ds}^{HB}}{\theta_d} -1 \bigg), \\
& \text{ARB}(\hat{\theta}^{HB}_d) =\bigg| \frac{1}{S} \sum_{s=1}^{S}  \bigg( \frac{\hat{\theta}_{ds}^{HB}}{\theta_d} -1 \bigg) \bigg|,
\end{align*}

\begin{align*}
&\text{MSE}(\hat{\theta}^{HB}_d) =  \frac{1}{S} \sum_{s=1}^{S} ( \hat{\theta}_{ds}^{HB}-\theta_d)^2  , \\
&\text{RMSE}(\hat{\theta}^{HB}_d) = \frac{\text{MSE}(\hat{\theta}^{HB}_d)}{\theta_d^2}, \\
&\text{AEFF}(\hat{\theta}^{HB}) =\sqrt{\frac{\sum_{d=1}^{D} \text{MSE}(y_d) }{\sum_{d=1}^{D} \text{MSE}(\hat{\theta}^{HB}_d)}}.
\end{align*}

\noindent Lastly, we consider the frequentist coverage of credible intervals defined by the $\alpha/2$ and $1-\alpha/2$ quantiles of the posterior of $\theta_d$,
\begin{align*}
\text{Coverage}_{1-\alpha} (\hat{\theta}^{HB}_d) &=  \frac{1}{S} \sum_{s=1}^{S}   \mathds{1} (\theta_d \in [ Q_{\alpha/2}[\theta_{ds} | \text{data} ],  Q_{1-\alpha/2}[\theta_{ds} | \text{data} ] ]),
\end{align*}
where $Q_{\pi}[\theta_{ds}| \text{data}]$ denotes the posterior quantile of order $\pi$ of $\theta_{ds}$.
The nominal coverage probability $1-\alpha$ is chosen to be equal to 0.95.

Simulations results are fully described for any setting in Table \ref{simresults}. RB, ARB, RMSE and Coverage are reported on average over the 105 simulations domains, showing that FB estimators outperform the Beta ones.

\begin{table}[h]
\footnotesize
\centering
\begin{tabular}{llllllll}
Measure & Scenario & & $\bar{\texttt{ARB}}$\%& $\bar{\texttt{RB}}$\%& $\bar{\texttt{RMSE}}$\%  & \texttt{AEFF} & 
$\bar{\texttt{Cov.}}$ 95\% \\
  \hline
      Atk(1) & \texttt{evt} &&&\\
      &  &direct&   1.44& -0.99 & 13.99     &&\\
  &     &Beta & 9.90 & -3.06 & 2.10 & 2.21 
       & 87.00 \\
&&  FB & 8.78 & -0.22 & 1.81 & 2.65 
& 89.70  \\
&  5\%&&&\\
 &    &direct&   2.00 &-1.42& 19.48      &&\\
 &  &Beta  & 11.15 & -2.46 & 2.44 & 2.46 
 &   84.98  \\
&&  FB & 9.38 & -0.12 & 2.04 & 2.97 
& 90.28  \\
& 3\%&&&\\
&   &direct& 3.89 & -3.68& 32.77\\
& &Beta & 11.49 & -2.17 & 2.61 & 3.07  
&  85.82  \\
&&  FB & 9.12 & -0.74 & 2.33 & 3.54  
& 91.50  \\
    \hline
        Atk(0.5) & \texttt{evt}&&&\\
 &        &direct&   1.69 & -1.24& 15.87     &&\\
 &        &Beta  & 9.56 & -2.75 & 1.96 & 2.50 
       & 87.64  \\
 & &FB & 8.41 & -0.23 & 1.67 & 2.95 
  & 91.19  \\
   & 5\%&&&\\
     &  &direct&    2.40 & -1.92&  21.90      &&\\
    & &Beta & 10.85 & -2.41 & 2.35 & 2.60
    &  85.25  \\
 & &FB & 8.85 & -0.04 & 1.85 & 3.34
  & 92.07  \\
  & 3\%&&&\\
  & &direct& 4.78 & -4.59& 34.72&&      \\
 & &Beta & 11.06 & -2.42 & 2.47 & 3.14 
 & 86.30  \\
 & &FB & 8.67 & -1.28 & 2.08 & 3.79 
& 93.01 \\
 \hline
     Relative Theil  & \texttt{evt}&&\\
     &     &direct&   9.69 &-5.14& 18.73    &&\\
 &       &Beta& 12.55 & -6.52 & 3.59 & 1.94 
      & 70.72  \\
 &   &FB & 9.71 & -3.72 & 3.31 & 2.22 
  & 73.31 \\
  &  5\%&&\\
     &    &direct&     3.63 &-3.23&  27.87     &&\\
       &  & Beta  & 12.30 & -5.02 & 3.21 & 2.36 
       & 81.79 \\
  & & FB  & 7.69 & -0.80 & 1.90 & 3.75 
 & 92.84  \\
   & 3\%&&&\\
  &  &direct&  7.48 &-7.35& 39.90&&        \\
   &  &  Beta & 12.71 & -4.82 & 3.38 & 2.62 
     & 80.76  \\
    & & FB & 8.32 & -3.33 & 2.55 & 3.44 
   & 90.68  \\
  \hline
  Gini  &  \texttt{evt}&&\\
  &  &direct&2.58 &-1.13& 4.75&&     \\
   &  &  Beta & 4.60 & -1.06 & 0.53 & 2.85 
   & 91.50  \\
 &    & FB & 4.43 & -0.28 & 0.50 & 3.01 
   & 93.23  \\
 &  5\%&&\\
  &  &direct&     3.29 &-0.10& 7.43    &&\\
 &  &Beta  & 5.10 & -0.58 & 0.69 & 3.16 
 & 91.23  \\
 & &FB & 4.90 & 0.68 & 0.67 & 3.33 
& 93.04  \\
  & 3\%&&\\
 &  &direct& 6.20 &-5.99& 9.02 &&\\
 & &Beta & 6.21 & -5.19 & 0.99 & 2.75  
& 89.84  \\
 & &FB & 5.79 & -4.80 & 0.92 & 2.89  
& 92.09  \\
 \hline
\end{tabular}
\caption{Absolute Relative Bias, Relative Bias, Relative Mean Squared Error, Average Effect, jointly with the 95\% coverage on average of direct estimators and model-based estimators related to both Beta and FB models for extreme value treated data (\texttt{evt}), 5\% and 3\% sampling rates. Quantities denoted with the bar are averaged among all areas.}
\label{simresults}
\end{table}

\begin{figure}
\centering
    \includegraphics[scale=0.7, angle=90
    ]{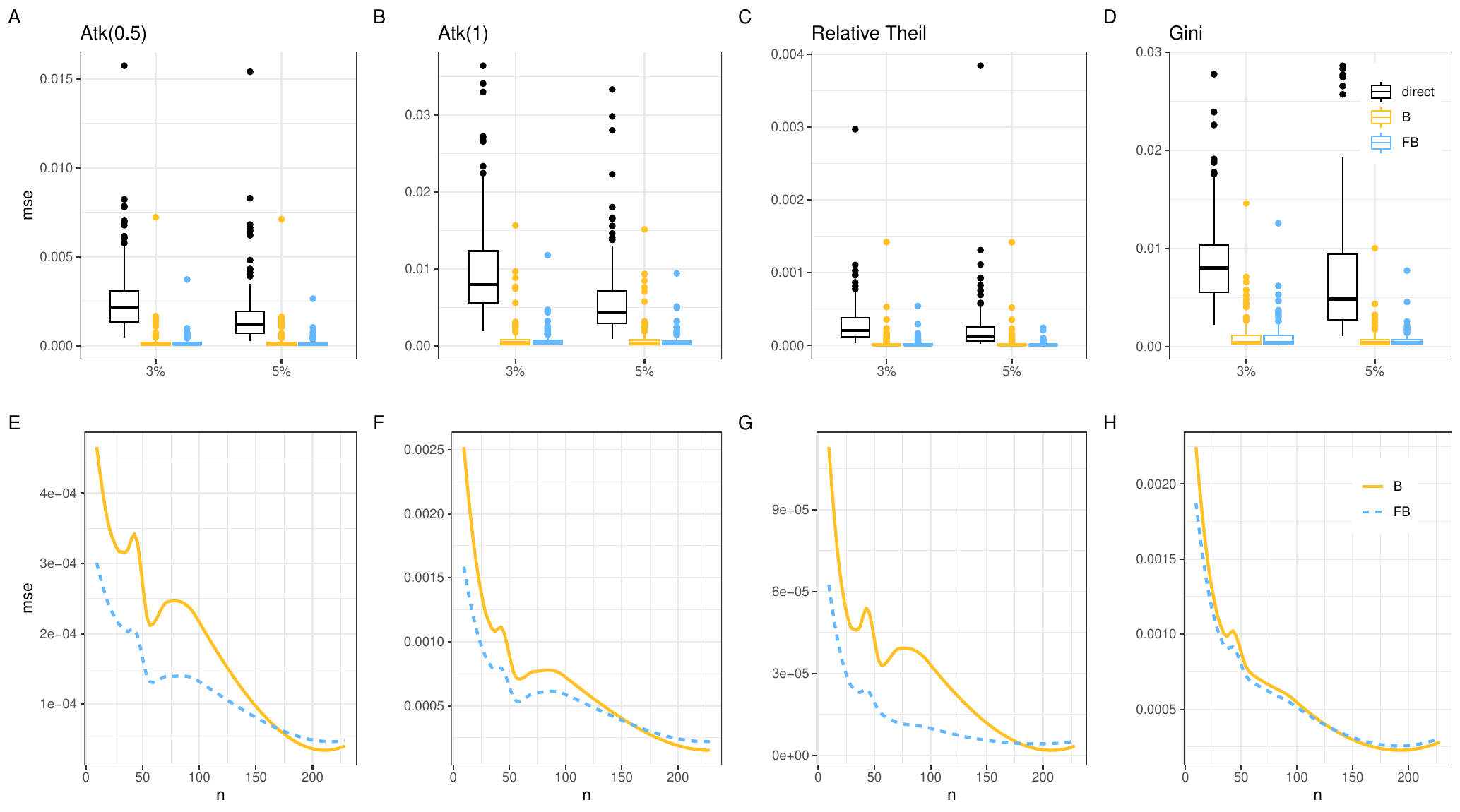} 
\small  \caption{MSE for each area. Plots on the top row show direct estimators values versus model-based estimators ones, while bottom row plots show model-based estimates MSE versus the sample sizes, depicted through smoothed lines}
  \label{mse}
  \end{figure}

\normalsize
Focusing on estimates reliability, both Beta and Flexible Beta models perform significantly better
than direct estimators: RMSE and AEFF show a great error reduction for all measures. Among them, the FB estimators perform better than the Beta ones in all cases.
Moreover, the FB estimators show a noticeable bias decrease compared to the Beta one, as clear from ARB and RB values in Table \ref{simresults}, concerning both magnitude and direction. This confirms the clues of inequality underestimation under a Beta model, notwithstanding the measure adopted, and shows that the FB model consistently reduces this underestimation. The bias improvement is at the expense of a slight variance increase, but the bias-variance trade-off favors the FB model, as notable from RMSE and AEFF.

The full distribution of MSEs of direct and model-based estimators related to the different domains is depicted by boxplots in Figure \ref{mse}. Firstly, all distributions show heavy right-tails, with several outlier domains having great error levels. Again, the error reduction induced by both small area models is noticeable, allowing estimators to borrow strengths across areas.
Specifically, while RMSEs, displayed in Table \ref{simresults}, indicate on average a moderate error improvement for the FB model, the full distributions show a great reduction in case of outlier domains. This reduction, as clear from the bottom row plots of Figure \ref{mse}, takes place in case of domains with the smallest sized samples. The greatest MSE reduction regards the Relative Theil Index, the lowest the Gini index.

  \begin{figure}[h]
\centering
    \includegraphics[width=0.95\linewidth]{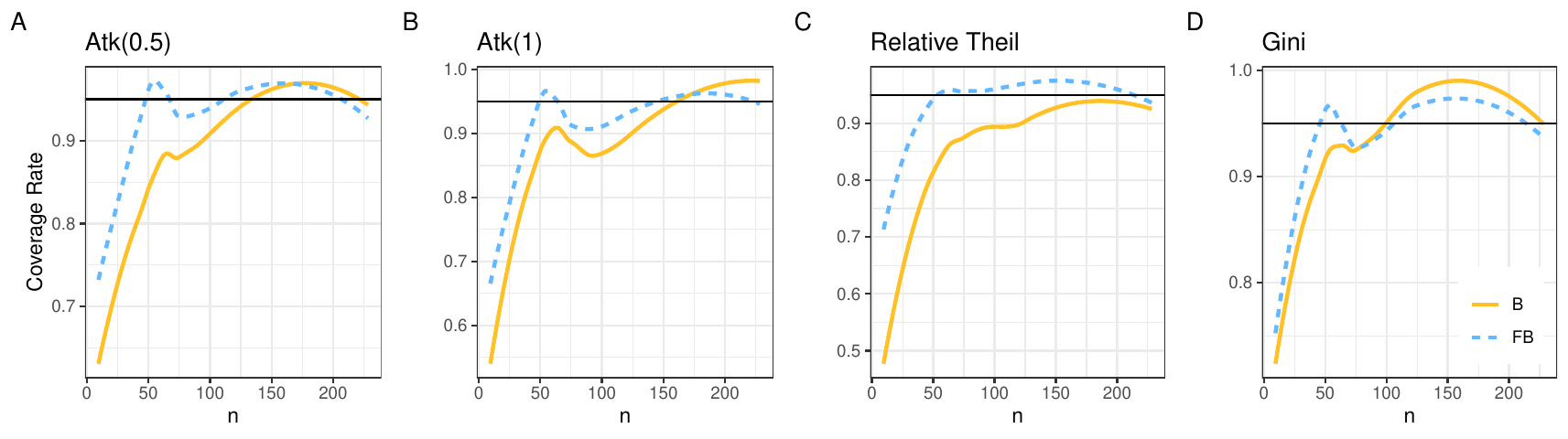}
  \caption{Coverage rate versus sample sizes for each measure and each model in 5\% simulation samples depicted through smoothed lines, black line fixed at the nominal level 0.95}
  \label{coverage}

  \end{figure}

 Flexible Beta models produce credible intervals that exhibit a noticeable better performance in terms of coverage, in some cases outperforming Beta intervals coverage by more than 10\% on average. Its trend over the sample sizes is displayed in Figure \ref{coverage}. While FB coverage rates converge to their nominal level in correspondence to 50 individual-sized samples, the Beta ones converge near 100 sized samples in case of Gini Index, near 130/150 in case of Atkinsons and Relative Theil Indexes.

 The similarity among Beta and FB model-based estimates for the Gini index is confirmed also in the simulation setting. Nevertheless, the FB model has higher coverage in case of small samples. Concerning different  simulation settings, the Relative Theil direct estimators show noticeably high bias in case of extreme value treated setting in comparison with other settings. This could be due to a failure of preliminary bias correction on direct estimators; indeed, input estimators does not satisfy unbiasedness, not allowing for comparability. The extremely low coverage of both model-based estimators is indeed justified by the high bias. Generally speaking, the performance gap between Beta and FB models increases at decreasing sampling rates/sizes and with no smoothed data. This denotes a progressive failure of the Beta model under the mentioned conditions, as clear from how fast its diagnostics get worse at varying settings.

\section{Conclusions}
\label{conclusions}

The reduction of inequalities, both within and between countries, is a prerequisite for achieving the
Sustainable Development Goals of the 2030 Agenda for Sustainable Development, adopted by all
United Nations Member States. At regional level, an increase in disparities within regions has been observed, whereas regional disparities between European countries are
gradually decreasing. Several low-growth regions exist within EU member states and,
among the richest states, it is possible to find areas characterised by
high levels of poverty and inequality (often post-industrial or rural ones). In this context, inequality indicators at a regional-level
breakdown would allow us to shift towards a more comprehensive and multifaceted view of territorial
convergence in the EU and to better understand causal mechanisms, essential for regional-targeted policies.

In this study, we propose a SAE model that aims at obtaining reliable estimates of the
most common inequality indicators: 
the Gini index, the Relative Theil
index and two Atkinson indexes defined for two different values of the inequality aversion
parameter. By considering that inequality estimators  
are unit-interval defined, skewed and heavy-tailed, we propose a
FB small area model.
The results are really encouraging as the estimates we obtain
outperform in different ways the most common Beta small area model, generally used for parameters defined on the unit interval.
We may define the FB model as a good general-purpose model for unit interval data as the two-component mixture is able to capture skewness and heavy tails in a satisfying way, guaranteeing at the same time analytic
tractability.

Our findings provide a basis for further research focused on multiple directions. An inequality mapping based on obtained estimates could be used to single out at first a territorial disparity analysis within the country, complementing the usual consideration of disparity between countries.
Furthermore, the set of inequality measures should be properly complemented by
quantile-based inequality indexes that, by focusing on distribution tails, are able to capture different
aspects of the income distribution with respect to concentration indexes. Quantile-based
indexes are not defined on the unit interval support and thus their estimation has to rely on different likelihood assumptions.  
Lastly, given the longitudinal nature of the EU-SILC survey, a model extension in this sense naturally follows, by considering subsequent waves at the same time.

\section*{Acknowledgements}

The work of Silvia De Nicolò was partially funded by the ALMA IDEA 2022 grant (title: "Social exclusion and territorial disparities: poverty and inequality mapping through advanced methods of small area estimation", project J45F21002000001), part of the European Union - NextGenerationEU funding.

\bibliographystyle{plainnat}
\bibliography{library.bib}

\end{document}